\theoremstyle{plain}
\newtheorem{thm}{Theorem}[section]
\newtheorem{prop}[thm]{Proposition}
\newtheorem{lem}[thm]{Lemma}
\newtheorem{con}[thm]{Conjecture}
\theoremstyle{definition}
\theoremstyle{remark}
\newcommand{\forme}[1]{}
\title{The Italian domination numbers of some products of directed cycles}
\author[Kim]{Kijung Kim}
\address{Department of Mathematics, Pusan National University, Busan 46241, Republic of Korea}
\email{knukkj@pusan.ac.kr}
\date{\today}
\subjclass[2010]{05C69}
\begin{document}

\begin{abstract}
An Italian dominating function on a digraph $D$ with vertex set $V(D)$
is defined as a function $f : V(D) \rightarrow \{0, 1, 2\}$  such that every vertex $v \in V(D)$ with
$f(v) = 0$ has at least two in-neighbors assigned $1$ under $f$ or one in-neighbor $w$ with $f(w) = 2$.
In this paper, we determine the exact values of the Italian domination numbers of some products of directed cycles.
\bigskip

\noindent
{\footnotesize \textit{Key words:} Italian dominating function; Italian domination number; cartesian product; strong product}
\end{abstract}

\maketitle

\insert\footins{\footnotesize
This research was supported by Basic Science Research Program through the National Research Foundation of Korea funded by the Ministry of Education (2020R1I1A1A01055403).}

\section{Introduction and preliminaries}\label{sec:intro}

Let $D=(V,A)$ be a finite simple digraph with vertex set $V=V(D)$ and arc set $A=A(D)$.
An arc joining $v$ to $w$ is denoted by $v \rightarrow w$.
The \textit{maximum out-degree} and \textit{maximum in-degree} of $D$ are denoted by $\Delta^+(D)$ and $\Delta^-(D)$, respectively.

Let $D_1 = (V_1,A_1)$ and $D_2=(V_2, A_2)$ be two digraphs.
The \textit{cartesian product} of $D_1$ and $D_2$ is the digraph $D_1 \square D_2$ with vertex set $V_1 \times V_2$ and for two vertices $(x_1,x_2)$ and $(y_1,y_2)$,
\[(x_1,x_2) \rightarrow (y_1,y_2)\]
if one of the following holds:
(i) $x_1 = y_1$ and $x_2 \rightarrow y_2$;
(ii) $x_1 \rightarrow y_1$ and $x_2=y_2$.

The \textit{strong product} of $D_1$ and $D_2$ is the digraph $D_1 \otimes D_2$ with vertex set $V_1 \times V_2$ and for two vertices $(x_1,x_2)$ and $(y_1,y_2)$,
\[(x_1,x_2) \rightarrow (y_1,y_2)\]
if one of the following holds:
(i) $x_1 \rightarrow y_1$ and $x_2 \rightarrow y_2$;
(ii) $x_1 = y_1$ and $x_2 \rightarrow y_2$;
(iii) $x_1 \rightarrow y_1$ and $x_2=y_2$.

An \textit{Italian dominating function} (IDF) on a digraph $D$
is defined as a function $f : V(D) \rightarrow \{0, 1, 2\}$  such that every vertex $v \in V(D)$ with
$f(v) = 0$ has at least two in-neighbors assigned $1$ under $f$ or one in-neighbor $w$ with $f(w) = 2$.
In other words, we say that a vertex $v$ for which $f(v) \in \{1, 2\}$ dominates itself, while a vertex $v$ with $f(v)=0$
is dominated by $f$ if it has at least two in-neighbors assigned $1$ under $f$ or one in-neighbor $w$ with $f(w) = 2$.
An Italian dominating function $f : V(D) \rightarrow \{0, 1, 2\}$ gives a partition $\{V_0, V_1, V_2\}$ of $V(D)$, where
$V_i:= \{ x \in V(D) \mid f(x)=i \}$.
The \textit{weight} of an Italian dominating function $f$ is the value $\omega(f) = f(V(D)) = \sum_{u \in V(D)} f(u)$.
The \textit{Italian domination number} of a digraph $D$, denoted by $\gamma_I(D)$, is the minimum taken over the weights of all Italian dominating functions on $D$.
A \textit{$\gamma_I(D)$-function} is an Italian dominating function on $D$ with weight $\gamma_I(D)$.

The Italian dominating functions in graphs and digraphs have studied in \cite{ CHHM, HK, GWLY, GXLZY, Vol, Vol-2}.
The authors of \cite{CHHM} introduce the concept of Italian domination and present bounds relating the Italian domination number to some other domination parameters.
The authors of \cite{HK} characterize the trees $T$ for which $\gamma(T) +1 =\gamma_I(T)$ and also
characterize the trees $T$ for which $\gamma_I(T) =2\gamma(T)$.
After that, there are some studies on the cartesian products of undirected cycles or undirected paths in \cite{GWLY, GXY, LSX, SSSM}.
Recently, the author of \cite{Vol} initiated the study of the Italian domination number in digraphs.
In this paper, we investigate the Italian domination numbers of cartesian products and strong products of directed cycles.

The following results are useful to our study.

\begin{prop}[\cite{Vol}]\label{mainthm2}
Let $D$ be a digraph of order $n$. Then $\gamma_I(D) \geq \lceil \frac{2n}{2 + \Delta^+(D)} \rceil$.
\end{prop}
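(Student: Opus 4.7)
My plan is to prove this by a double counting argument on the quantity
\[S = \sum_{v \in V(D)} \Big(f(v) + \sum_{w : w \to v} f(w)\Big),\]
where $f$ is a $\gamma_I(D)$-function with associated partition $(V_0,V_1,V_2)$, so that $\gamma_I(D) = |V_1| + 2|V_2|$.

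I would first record the useful reformulation of the IDF condition: for $v \in V_0$, the requirement of having either one in-neighbor with $f=2$ or at least two in-neighbors with $f=1$ is equivalent to $\sum_{w \to v} f(w) \geq 2$. Together with the trivial bounds $f(v) \geq 1$ for $v \in V_1$ and $f(v) \geq 2$ for $v \in V_2$, this yields the vertex-by-vertex lower bound
\[f(v) + \sum_{w \to v} f(w) \geq \begin{cases} 2 & \text{if } v \in V_0 \cup V_2,\\ 1 & \text{if } v \in V_1, \end{cases}\]
and summing over $v$ gives $S \geq 2|V_0| + |V_1| + 2|V_2| = 2n - |V_1|$.

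Next I would swap the order of summation on the inner term: $\sum_{v} \sum_{w \to v} f(w) = \sum_{w} f(w)\, d^+(w) \leq \Delta^+(D)\sum_{w} f(w) = \Delta^+(D)\,\gamma_I(D)$. Adding the $\sum_v f(v) = \gamma_I(D)$ contribution gives $S \leq (1 + \Delta^+(D))\,\gamma_I(D)$. Combining the two estimates and using the obvious inequality $|V_1| \leq |V_1| + 2|V_2| = \gamma_I(D)$, I obtain
\[2n - \gamma_I(D) \leq 2n - |V_1| \leq S \leq (1 + \Delta^+(D))\,\gamma_I(D),\]
which rearranges to $2n \leq (2 + \Delta^+(D))\,\gamma_I(D)$, and taking ceilings (since $\gamma_I(D)$ is an integer) yields the desired bound.

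There is no real obstacle here; the only mildly delicate point is that the vertex-by-vertex lower bound loses one unit on each $V_1$ vertex (since a $V_1$ vertex self-dominates with weight only $1$ and need not receive anything from its in-neighbors), and this deficit must be absorbed at the very end via $|V_1| \leq \gamma_I(D)$. If I tried to demand $S \geq 2n$ uniformly I would fail, so the bookkeeping step comparing $|V_1|$ to $\gamma_I(D)$ is the one place where care is required.
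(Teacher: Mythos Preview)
Your argument is correct. The reformulation of the IDF condition at a vertex $v\in V_0$ as $\sum_{w\to v} f(w)\ge 2$ is exact, the double counting via interchange of summation is valid, and the final absorption of the $|V_1|$ deficit via $|V_1|\le |V_1|+2|V_2|=\gamma_I(D)$ closes the inequality cleanly.

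There is nothing in the paper to compare against: this proposition is quoted from Volkmann \cite{Vol} as a known result and no proof is reproduced here. Your double-counting argument is the standard one for bounds of this type and is essentially what one finds in the literature; there is no gap.
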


\begin{prop}[\cite{Vol}]\label{mainthm3}
Let $D$ be a digraph of order $n$. Then $\gamma_I(D) \leq n$  and  $\gamma_I(D) = n$ if and only if $\Delta^+(D), \Delta^-(D) \leq 1$.
\end{prop}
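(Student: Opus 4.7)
The plan is to split the statement into three pieces: the universal upper bound $\gamma_I(D) \leq n$, and each direction of the equivalence characterizing when equality holds. The upper bound is the trivial starting point, the "$\Leftarrow$" direction needs a counting/injection argument that uses both degree bounds simultaneously, and the "$\Rightarrow$" direction reduces to an explicit construction in two cases.

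For the bound $\gamma_I(D) \leq n$, I would exhibit the constant function $f \equiv 1$: since $V_0 = \emptyset$, no domination condition needs to be verified, so $f$ is an IDF of weight $n$.

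For the "$\Leftarrow$" direction, I would assume $\Delta^+(D), \Delta^-(D) \leq 1$ and prove $\gamma_I(D) \geq n$. Let $f$ be an arbitrary IDF with partition $\{V_0, V_1, V_2\}$. Since every vertex has at most one in-neighbor, no vertex of $V_0$ can be dominated by two $1$-weighted in-neighbors; hence each $v \in V_0$ has a unique in-neighbor $\varphi(v) \in V_2$. Since $\Delta^+(D) \leq 1$, each vertex of $V_2$ has at most one out-neighbor, so the map $\varphi : V_0 \to V_2$ is injective and $|V_0| \leq |V_2|$. Therefore $\omega(f) = |V_1| + 2|V_2| \geq |V_0| + |V_1| + |V_2| = n$, giving $\gamma_I(D) \geq n$ and hence equality.

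For the contrapositive of "$\Rightarrow$", I would suppose $\Delta^+(D) \geq 2$ or $\Delta^-(D) \geq 2$ and construct an IDF of weight $n-1$. If some $v$ has two out-neighbors $u_1, u_2$, set $f(v) = 2$, $f(u_1) = f(u_2) = 0$, and $f(x) = 1$ on the remaining $n-3$ vertices; if instead some $v$ has two in-neighbors $u_1, u_2$, set $f(v) = 0$ and $f(x) = 1$ elsewhere. In both cases the defining condition is immediately verified and $\omega(f) = n - 1$, so $\gamma_I(D) < n$.

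The main obstacle is the "$\Leftarrow$" counting argument: one must notice that the two degree hypotheses play complementary roles, with $\Delta^-(D) \leq 1$ forcing the domination of each $0$-vertex to come from a $2$-vertex (ruling out the two-$1$-neighbor option) and $\Delta^+(D) \leq 1$ providing the injectivity of $\varphi$. Weakening either bound breaks the inequality $|V_0| \leq |V_2|$, and the other two steps are then routine.
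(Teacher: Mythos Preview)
Your proof is correct in all three parts. Note, however, that the paper itself does not supply a proof of this proposition: it is quoted from Volkmann's paper \cite{Vol} and stated without argument, so there is no in-paper proof to compare against. Your treatment stands on its own, and the injection $\varphi:V_0\to V_2$ using the two degree hypotheses in complementary roles is exactly the right idea for the ``$\Leftarrow$'' direction.
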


\begin{prop}[\cite{Vol}]\label{mainthm3-cor}
If $D$ is a directed path or a directed cycle of order $n$, then $\gamma_I(D) = n$.
\end{prop}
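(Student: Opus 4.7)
The plan is to derive this as an immediate corollary of Proposition \ref{mainthm3}, which characterizes when $\gamma_I(D)=n$ exactly as the condition $\Delta^+(D), \Delta^-(D) \leq 1$. Since both statements are about the same equality $\gamma_I(D)=n$, the work reduces to verifying the degree hypothesis for the two families in question.

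First, I would observe that if $D$ is a directed path $v_1 \to v_2 \to \cdots \to v_n$, then every vertex $v_i$ has out-degree at most $1$ (namely, the arc to $v_{i+1}$ when $i<n$) and in-degree at most $1$ (the arc from $v_{i-1}$ when $i>1$). Likewise, if $D$ is a directed cycle $v_1 \to v_2 \to \cdots \to v_n \to v_1$, then every vertex has out-degree exactly $1$ and in-degree exactly $1$. In both cases $\Delta^+(D) \leq 1$ and $\Delta^-(D) \leq 1$.

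Then, applying the ``if'' direction of Proposition \ref{mainthm3}, I conclude $\gamma_I(D) = n$. There is no real obstacle here; the only thing to be careful about is simply noting both endpoint cases for the path (so that neither the in-degree nor the out-degree exceeds $1$), and recording that the degree bounds are met as equalities or strict inequalities depending on whether $D$ is a cycle or a path. The statement is essentially a sanity check that the extremal class from Proposition \ref{mainthm3} contains the familiar examples of directed paths and directed cycles.
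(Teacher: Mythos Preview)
Your argument is correct: directed paths and directed cycles have $\Delta^+,\Delta^-\le 1$, so the characterization in the preceding proposition immediately gives $\gamma_I(D)=n$. The paper itself does not supply a proof of this statement---it is quoted from \cite{Vol} as a known result---but your derivation is exactly the intended reading of it as a corollary of that characterization.
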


\section{The Italian domination numbers of some products of directed cycles}\label{sec:main1-2}

In this section, we determine the exact values of the Italian domination numbers of some products of directed cycles.

First, we consider the cartesian product of directed cycles.
We denote the vertex set of a directed cycle $C_m$ by $\{1,2,\dotsc ,m\}$, and
assume that $i \rightarrow i+1$ is an arc of $C_m$.
For every vertex $(i,j) \in V(C_m \square C_n)$, the first and second components are considered modulo $m$ and $n$, respectively.
For each $1 \leq k \leq n$, we denote by $C_m^k$ the subdigraph of $C_m \square C_n$ induced by the set $\{ (j,k) \mid 1 \leq j \leq m\}$.
Note that $C_m^k$ is isomorphic to $C_m$.
Let $f$ be a $\gamma_I(C_m \square C_n)$-function and set $a_k = \sum_{x \in V(C_m^k)} f(x)$.
Then $\gamma_I(C_m \square C_n) = \sum_{k=1}^n a_k$.
It is easy to see that $C_m \square C_n$ is isomorphic to $C_n \square C_m$. So, $\gamma_I(C_m \square C_n) = \gamma_I(C_n \square C_m)$.

\begin{thm}\label{mainthm3}
If $m=2r$ and $n=2s$ for some positive integers $r, s$, then $\gamma_I(C_m \square C_n) = \frac{mn}{2}$.
\end{thm}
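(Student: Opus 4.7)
The plan is to sandwich $\gamma_I(C_m \square C_n)$ between two matching bounds of $mn/2$.

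First I would obtain the lower bound directly from Proposition~\ref{mainthm2}. Every vertex $(i,j)$ of the cartesian product has exactly two out-neighbors, namely $(i+1,j)$ and $(i,j+1)$, so $\Delta^+(C_m \square C_n) = 2$. Since the digraph has order $mn$, the proposition yields
\[
\gamma_I(C_m \square C_n) \;\geq\; \left\lceil \frac{2mn}{2+2} \right\rceil \;=\; \frac{mn}{2}.
\]

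For the matching upper bound I would exhibit an explicit IDF of weight $mn/2$ using a checkerboard pattern. Since both $m$ and $n$ are even, the parity of $i+j$ is well-defined on the vertex set $\{1,\dots,m\} \times \{1,\dots,n\}$ with coordinates read modulo $m$ and $n$. Define
\[
f(i,j) \;=\; \begin{cases} 1 & \text{if } i+j \equiv 0 \pmod 2, \\ 0 & \text{otherwise.} \end{cases}
\]
The weight is exactly $mn/2$, and to verify the IDF condition it suffices to observe that any vertex with $f(i,j)=0$ has $i+j$ odd, so its two in-neighbors $(i-1,j)$ and $(i,j-1)$ both have even coordinate sum and are each assigned the value $1$, which satisfies the two-in-neighbors-with-value-$1$ clause.

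Combining the two bounds completes the argument. There is no real obstacle: the only subtlety is that the checkerboard assignment must wrap consistently around the torus, which is precisely why the theorem requires both $m$ and $n$ to be even, and that is the one place the parity hypothesis enters. No delicate case analysis is needed because the out-degree bound of Proposition~\ref{mainthm2} is already tight on this construction.
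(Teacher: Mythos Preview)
Your proposal is correct and follows essentially the same approach as the paper: the paper also obtains the lower bound from Proposition~\ref{mainthm2} via $\Delta^+=2$, and for the upper bound it constructs the very same checkerboard IDF, written there as $f((2i-1,2j-1))=f((2i,2j))=1$, which is exactly your condition $i+j\equiv 0\pmod 2$. Your added remark about why the parity pattern wraps consistently only when $m$ and $n$ are even is a helpful clarification the paper leaves implicit.
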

\begin{proof}
Define $f : V(C_m \square C_n) \rightarrow \{0,1,2\}$ by
\[f((2i-1,2j-1)) = f((2i,2j)) = 1\]
for each $1 \leq i \leq r$ and $1 \leq j \leq s$,
and \[f((x, y))=0\] otherwise.
It is easy to see that $f$ is an IDF of $C_m \square C_n$ with weight $\frac{mn}{2}$ and
so $\gamma_I(C_m \square C_n) \leq  \frac{mn}{2}$.
Since $\Delta^+(D)=2$, it follows from Proposition \ref{mainthm2} that $\gamma_I(C_m \square C_n) \geq  \frac{mn}{2}$.
Thus, we have $\gamma_I(C_m \square C_n) = \frac{mn}{2}$.
\end{proof}

\begin{thm}\label{mainthm5}
For an odd integer $n \geq 3$,
$\gamma_I(C_2 \square C_n) = n+1$.
\end{thm}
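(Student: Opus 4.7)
The plan is to obtain the two matching bounds separately. The upper bound will be witnessed by an explicit IDF of weight $n+1$, and the lower bound will be obtained by upgrading Proposition~\ref{mainthm2} via a parity obstruction that rules out the value $n$.

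For the upper bound I would write $n = 2k+1$ and define $f : V(C_2 \square C_n) \to \{0,1,2\}$ by $f((1, 2i-1)) = 1$ for $1 \leq i \leq k+1$, $f((2, 2i)) = 1$ for $1 \leq i \leq k$, $f((2, n)) = 1$, and $f \equiv 0$ elsewhere. The total weight is $(k+1)+k+1 = n+1$, and every zero vertex has two in-neighbors of value~$1$: a zero $(1, 2i)$ is dominated by $(2,2i)$ and $(1, 2i-1)$; a zero $(2, 2i-1)$ with $2 \leq i \leq k+1$ is dominated by $(1, 2i-1)$ and $(2, 2i-2)$; and $(2, 1)$ is dominated by $(1,1)$ and by $(2, n)$ via the cyclic wrap-around. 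Hence $f$ is an IDF and $\gamma_I(C_2 \square C_n) \leq n+1$.

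For the lower bound I would argue by contradiction. Since $\Delta^+(C_2 \square C_n) = 2$, Proposition~\ref{mainthm2} only yields $\gamma_I \geq n$, so suppose some IDF $f$ achieves weight exactly $n$. The IDF condition together with $f(v) \geq 0$ gives the pointwise inequality $2f(v) + \sum_{u \to v} f(u) \geq 2$ for every $v$, and summing over $V$ while using that every out-degree equals~$2$ recovers $4\omega(f) \geq 4n$. Equality at $\omega(f) = n$ therefore forces the local inequality to be tight at every vertex, and a case analysis on $f(v) \in \{0,1,2\}$ gives: (i) no vertex can have value~$2$ (else the left-hand side is at least $4$); (ii) a vertex of value~$1$ has both in-neighbors valued~$0$; (iii) a vertex of value~$0$ has both in-neighbors valued~$1$. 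Consequently $f(u) \neq f(v)$ along every arc.

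Restricting $f$ to the vertex set $\{(1, j) : 1 \leq j \leq n\}$, which with the arcs $(1,j) \to (1, j+1)$ forms a directed $n$-cycle, yields a proper $\{0,1\}$-coloring of an odd cycle—impossible. This contradicts $\omega(f) = n$, so $\gamma_I(C_2 \square C_n) \geq n+1$, matching the upper bound. The main delicate step is the rigidity analysis at the equality case of Proposition~\ref{mainthm2}, especially ruling out value~$2$ and forcing both in-neighbors of a zero to be~$1$; after that, the odd length of the row cycle delivers the contradiction essentially for free.
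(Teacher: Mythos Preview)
Your proof is correct. The upper-bound construction coincides with the paper's. For the lower bound you take a genuinely different route: the paper works column by column, first using an exchange argument to show that each column weight $a_k$ is at least $1$, and then (since $\sum a_k = n$ forces every $a_k = 1$) chasing the forced placement of the $1$'s around the columns until an undominated vertex is exhibited. You instead analyze the equality case of Proposition~\ref{mainthm2}: the local inequality $2f(v) + \sum_{u\to v} f(u) \ge 2$ must be tight at every vertex when $\omega(f) = n$, which eliminates the value~$2$ and forces $f$ to alternate along every arc, i.e.\ to be a proper $2$-coloring of the underlying graph; the odd $C_n$-fiber then gives the contradiction immediately. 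Your rigidity argument is cleaner, avoids the ad hoc exchange, and extends verbatim to any $2$-in/$2$-out digraph containing an odd directed cycle, whereas the paper's column chase is tailored specifically to $C_2 \square C_n$. One cosmetic slip: in verifying the upper bound the range for the zeros $(2,2i-1)$ should read $2 \le i \le k$ rather than $k+1$, since $(2,n)$ carries value~$1$; this is harmless.
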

\begin{proof}
Define $f : V(C_2 \square C_n) \rightarrow \{0,1,2\}$ by
\[f((1,2j-1)) =1\] for each $1 \leq j \leq \frac{n+1}{2}$,
\[f((2,2j)) =1\] for each $1 \leq j \leq \frac{n-1}{2}$,
\[f((2,n)) =1\] and \[f((x, y))=0\] otherwise.
It is easy to see that $f$ is an IDF of $C_2 \square C_n$ with weight $n+1$ and
so $\gamma_I(C_2 \square C_n) \leq  n+1$.

Now we claim that $\gamma_I(C_2 \square C_n) \geq n+1$.
Suppose to the contrary that $\gamma_I(C_2 \square C_n) \leq  n$.
Let $f$ be a $\gamma_I(C_2 \square C_n)$-function.
If $a_k=0$ for some $k$, assume without loss of generality $k=3$, then $f((1,3))=f((2,3))=0$.
To dominate the vertices $(1,3)$ and $(2,3)$, we must have $f((1,2)) = f((2,2)) = 2$.
Define $g : V(C_2 \square C_n) \rightarrow \{0,1,2\}$ by \[g((1,2))= g((2,1))= g((2,3))=1, g((2,2))=0\]
and \[g((x,y))= f((x,y))\] otherwise.
Then $g$ is an IDF of $C_2 \square C_n$ with weight less than $\omega(f)$, which is a contradiction.
Thus, $a_k \geq 1$ for each $k$.
By assumption, $a_k =1$ for each $k$.
Without loss of generality, we assume that $f((1,2))=1$.
To dominate $(2,2)$, we must have $f((2,1))=1$. Since $a_3=1$ and $f((2,2))=0$, we have $f((2,3))=1$.
By repeating this process, we obtain $f((1,2i))=1$ for each $1 \leq i \leq \frac{n-1}{2}$, $f((2,2i-1))=1$ for $1 \leq i \leq\frac{n+1}{2}$
and $f((x,y))=0$ otherwise. But, the vertex $(1,1)$ is not dominated, a contradiction.
Thus we have $\gamma_I(C_2 \square C_n) \geq n+1$.
This completes the proof.
\end{proof}

\begin{thm}\label{mainthm7}
For an integer $n \geq 3$,
$\gamma_I(C_3 \square C_n) = 2n$.
\end{thm}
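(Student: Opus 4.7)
The plan is to prove the two inequalities $\gamma_I(C_3\square C_n)\leq 2n$ and $\gamma_I(C_3\square C_n)\geq 2n$ separately. For the upper bound I would exhibit an explicit IDF of weight $2n$ built from the three weight-$2$ column patterns $A=(1,1,0)$, $B=(1,0,1)$, and $C=(0,1,1)$. A direct check of the IDF constraint restricted to a single column shows that each of these patterns can follow either of the other two but cannot follow itself. Hence it suffices to realize the columns of $C_3\square C_n$ by a cyclic sequence over $\{A,B,C\}$ of length $n$ with no two consecutive entries equal: when $n$ is even the string $ABAB\cdots AB$ works, and when $n\geq 3$ is odd the string $ABAB\cdots ABC$ works. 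Each column then contributes $2$, giving total weight $2n$.

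For the lower bound, let $f$ be any IDF and set $a_k=\sum_{x\in V(C_3^k)}f(x)$, so $\omega(f)=\sum_{k=1}^n a_k$. The key step is the following lemma: (i) if $a_k=0$, then $a_{k-1}=6$; (ii) if $a_k=1$, then $a_{k-1}\geq 3$. Part (i) is immediate: every $(i,k)$ is a zero-vertex whose in-column in-neighbor $(i-1,k)$ is also zero, so its other in-neighbor $(i,k-1)$ must equal $2$, making column $k-1$ the all-$2$'s column. For part (ii), if $f((i,k))=1$ is the sole nonzero value in column $k$ (indices read mod $3$), then analyzing the two zero-vertices in succession shows that $(i+1,k)$ forces $f((i+1,k-1))\geq 1$ and $(i+2,k)$ forces $f((i+2,k-1))=2$, so $a_{k-1}\geq 3$.

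Now set $A=\{k\nid a_k\leq 1\}=A_0\sqcup A_1$ with $A_j=\{k\nid a_k=j\}$. The lemma forces $A$ to be independent on $C_n$, because $k\in A$ implies $a_{k-1}\geq 3$ so $k-1\notin A$. Consequently the shift $P(k)=k-1\pmod{n}$ injects $A$ into $\{1,\dots,n\}\setminus A$, and the four sets $A_0$, $A_1$, $P(A_0)$, $P(A_1)$ are pairwise disjoint subsets of $\{1,\dots,n\}$. Estimating contributions column by column (using $a_k\geq 2$ whenever $k\notin A$) yields
\[
\sum_{k=1}^n a_k\geq \abs{A_1}+6\abs{A_0}+3\abs{A_1}+2\bigl(n-2\abs{A_0}-2\abs{A_1}\bigr)=2n+2\abs{A_0}\geq 2n,
\]
which combined with the upper-bound construction proves the theorem. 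The main obstacle is really the case analysis in part (ii) of the lemma; after that, the remainder is a clean pigeonhole argument on the cycle $C_n$.
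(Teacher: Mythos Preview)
Your proof is correct, and the lower-bound argument is genuinely different from the paper's.

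For the upper bound, both you and the paper assign weight-$2$ patterns to columns; the paper cycles through $A,C,B$ periodically (handling $n\bmod 3$ separately), while you alternate $A,B$ and append a single $C$ when $n$ is odd. The underlying observation---that any of $A,B,C$ may follow any other---is the same.

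For the lower bound, the paper argues extremally: it picks a $\gamma_I$-function $h$ minimizing $|\{k:a_k=1\}|$, assumes this set is nonempty, and then in two cases explicitly modifies $h$ on three or four consecutive columns to produce an IDF with strictly fewer weight-$1$ columns, contradicting minimality. Your approach replaces this construction-by-modification with a direct counting inequality. Your lemma (\,$a_k=0\Rightarrow a_{k-1}=6$ and $a_k=1\Rightarrow a_{k-1}\ge 3$\,) shows that the set $A=\{k:a_k\le 1\}$ is independent on $C_n$, after which pairing each $k\in A$ with $k-1$ and summing gives $\sum_k a_k\ge 2n+2|A_0|\ge 2n$ in one line. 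This avoids the paper's case split and the verification that the modified functions are still IDFs; conversely, the paper's argument yields slightly more structural information (that an optimal IDF can always be taken with every $a_k\ge 2$).
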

\begin{proof}
When $n=3r$ for some positive integer $r$, define $f_0 : V(C_3 \square C_n) \rightarrow \{0,1,2\}$ by
\[f_0((1,3j+1)) = f_0((2,3j+1)) = 1\]
for each  $0 \leq j \leq n-1$,
\[f_0((2,3j+2)) = f_0((3,3j+2)) = 1\]
for each  $0 \leq j \leq n-1$,
\[f_0((1,3j+3)) = f_0((3,3j+3)) = 1\]
for each  $0 \leq j \leq n-1$ and
\[f_0((x,y))=0\]
otherwise.

When $n=3r+1$ for some positive integer $r$, define $f_1 : V(C_3 \square C_n) \rightarrow \{0,1,2\}$ by
\[f_1((2,n)) = f_1((3,n)) = 1\]
and
\[f_1((x,y)) = f_0((x,y))\]
otherwise.

When $n=3r+2$ for some positive integer $r$, define $f_2 : V(C_3 \square C_n) \rightarrow \{0,1,2\}$ by
\[f_2((1,n-1)) = f_2((2,n-1)) = f_2((1,n)) = f_2((3,n)) = 1\]
and
\[f_2((x,y)) = f_0((x,y))\]
otherwise.
It is easy to see that $f_i$ $(i=0,1,2)$ is an IDF of $C_3 \square C_n$ with weight $2n$
and so $\gamma_I(C_3 \square C_n) \leq 2n$.

\vskip5pt
Now we prove that $\gamma_I(C_3 \square C_n) \geq 2n$.
Let $f$ be a $\gamma_I(C_3 \square C_n)$-function.

\vskip5pt
\textbf{Claim 1.} $a_k \geq 1$ for each $1 \leq k \leq n$.

\text{Proof.}
Suppose to the contrary that $a_k =0$ for some $k$, say $k=n$.
To dominate $(1,n)$, $(2,n)$ and $(3,n)$, we must have $f((1,n-1))=f((2,n-1))=f((3,n-1))=2$.
But, the function $g$ defined by
\[g((1,n-1))=g((2,n-1))=g((3,n-1))=1,\]
\[g((1,n))=g((2,n))=1\]
and
\[g((x,y)) = f((x,y))\]
otherwise, is an IDF of $C_3 \square C_n$ with weight less than $\omega(f)$.
This is an contradiction.
\qed

\vskip5pt
We choose a $\gamma_I(C_3 \square C_n)$-function $h$ so that the size of $M_h:=\{ k \mid a_k=1\}$ is as small as possible.
\vskip5pt
\textbf{Claim 2.} $|M_h|=0$.

\text{Proof.}
Suppose to the contrary that $|M_h| \geq 1$.
Without loss of generality, assume that $a_n =1$ and $h((1,n))=1$.
To dominate $(2,n)$ and $(3,n)$, we must have $h((2,n-1))=1$ and $h((3,n-1))=2$.
If $n=3$, then clearly $a_1 \geq 2$ and so $\gamma_I(C_3 \square C_3) \geq 6$.
However, when $n=3$, the previously defined function $f_0$ is a $\gamma_I(C_3 \square C_3)$-function such that $\omega(f_0)=6$ and $|M_{f_0}|=0$.
This contradicts the choice of $h$.
From now on, assume $n \geq 4$.
We divide our consideration into the following two cases.

\vskip5pt
\text{Case 1.} $a_{n-2} =1$.

By the same argument as above, we have $a_{n-3} \geq 3$. So $a_{n-3} + a_{n-2} + a_{n-1} + a_n \geq 8$.
If $n=4$, then the previously defined function $f_1$ induces a contradiction. Suppose $n \geq 5$.
Since $a_{n-4} \geq 1$ by Claim 1, $h((i,n-4))= 1$ or $2$ for some $i \in \{1,2,3\}$.
Without loss of generality, we may assume $h((1,n-4))= 1$ or $2$.
Define $t : V(C_3 \square C_n) \rightarrow \{0,1,2\}$ by
\[t((1,n-3)) = t((2,n-2)) = t((1,n-1)) = t((2,n)) =0,\]
\[t((2,n-3)) = t((1,n-2)) = t((2,n-1)) = t((1,n)) =1,\]
\[t((3,n-3)) = t((3,n-2)) = t((3,n-1)) = t((3,n)) =1\]
and
\[t((x,y))= h((x,y))\]
otherwise.
Then it is easy to see that $t$ is an IDF of $C_3 \square C_n$ such that $|M_t| < |M_h|$.
This contradicts the choice of $h$.

\vskip5pt
\text{Case 2.} $a_{n-2} \geq 2$.

Now $a_{n-2} + a_{n-1} + a_n \geq 6$.
Since $a_{n-3} \geq 1$ by Claim 1, $h((i,n-3))= 1$ or $2$ for some $i \in \{1,2,3\}$.
Without loss of generality, we may assume $h((1,n-3))= 1$ or $2$.
Define $t : V(C_3 \square C_n) \rightarrow \{0,1,2\}$ by
\[t((1,n-2)) = t((2,n-1)) = t((3,n)) = 0,\]
\[t((2,n-2)) = t((1,n-1)) = t((1,n)) = 1,\]
\[t((3,n-2)) = t((3,n-1)) = t((2,n)) = 1\]
and
\[t((x,y))= h((x,y))\]
otherwise.
Then it is easy to see that $t$ is an IDF of $C_3 \square C_n$ such that $|M_t| < |M_h|$.
This contradicts the choice of $h$.
\qed

\vskip5pt
By Claims 1 and 2, we have $\gamma_I(C_3 \square C_n) \geq 2n$.
This completes the proof.
\end{proof}

\vskip15pt

Next, we consider the strong product of directed cycles.
We denote the vertex set of a directed cycle $C_m$ by $\{1,2,\dotsc ,m\}$, and
assume that $i \rightarrow i+1$ is an arc of $C_m$.
For every vertex $(i,j) \in V(C_m \otimes C_n)$, the first and second components are considered modulo $m$ and $n$, respectively.
For each $1 \leq k \leq n$, we denote by $C_m^k$ the subdigraph of $C_m \otimes C_n$ induced by the set $\{ (j,k) \mid 1 \leq j \leq m\}$.
Note that $C_m^k$ is isomorphic to $C_m$.
Let $f$ be a $\gamma_I(C_m \otimes C_n)$-function and set $a_k = \sum_{x \in V(C_m^k)} f(x)$.
Then $\gamma_I(C_m \otimes C_n) = \sum_{k=1}^n a_k$.

\begin{lem}\label{bound-strong}
For positive integers $m, n \geq 2$, $\gamma_I(C_m \otimes C_n) \geq \lceil \frac{mn}{2} \rceil$.
\end{lem}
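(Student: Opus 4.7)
The plan is to establish the stronger local inequality $a_{k-1}+a_k\ge m$ for every index $k$, and then sum over $k$. Since every $a_k$ is counted exactly twice in $\sum_{k=1}^{n}(a_{k-1}+a_k)$, this yields $2\omega(f)\ge mn$, and the integrality of the weight upgrades it to $\omega(f)\ge\lceil mn/2\rceil$. Note that the generic bound of Proposition \ref{mainthm2} only gives $\gamma_I(C_m\otimes C_n)\ge\lceil 2mn/5\rceil$ since $\Delta^+(C_m\otimes C_n)=3$, so a sharper, structure-aware argument is required.

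Fix a $\gamma_I(C_m\otimes C_n)$-function $f$, fix $k$, and for $1\le i\le m$ put
\[
c_i \;:=\; f((i,k-1))+f((i,k)),
\]
so that $a_{k-1}+a_k=\sum_{i=1}^{m}c_i$. The key observation is that the zeros of $c$ cannot cluster. Indeed, if $c_i=0$ then in particular $f((i,k))=0$, and because the three in-neighbours of $(i,k)$ in $C_m\otimes C_n$ are $(i-1,k-1)$, $(i-1,k)$, $(i,k-1)$, the Italian domination condition gives $f((i-1,k-1))+f((i-1,k))+f((i,k-1))\ge 2$. Since $f((i,k-1))=0$ by assumption, this forces $f((i-1,k-1))+f((i-1,k))\ge 2$, i.e.\ $c_{i-1}\ge 2$. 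In particular, no two consecutive $c_i$ vanish.

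The proof is then finished by a short cyclic counting argument: letting $Z=\{i:c_i=0\}$, the map $i\mapsto i-1$ injects $Z$ into $\{i:c_i\ge 2\}$, which is disjoint from $Z$, so
\[
\sum_{i=1}^{m}c_i\;\ge\;2|Z|+(m-2|Z|)\;=\;m.
\]
I expect the main obstacle to be identifying the right quantity to track. Individual row sums only satisfy the trivial bound $a_k\ge 0$, but pairing adjacent rows column by column exposes the diagonal in-neighbour $(i-1,k-1)$ that is particular to the strong product, and this coupling is precisely what delivers the bound $\lceil mn/2\rceil$.
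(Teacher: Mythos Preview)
Your proof is correct and follows the same overall strategy as the paper: both establish the local inequality $a_{k-1}+a_k\ge m$ for every $k$ and then sum cyclically to obtain $2\omega(f)\ge mn$. The paper proves the local inequality by a case split on whether $a_{k+1}=0$ or $a_{k+1}=t>0$, whereas your column-pair quantities $c_i=f((i,k-1))+f((i,k))$ give a single uniform argument (the observation $c_i=0\Rightarrow c_{i-1}\ge2$ followed by the injection $Z\hookrightarrow\{i:c_i\ge2\}$); this is a cleaner execution of the same idea rather than a genuinely different route.
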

\begin{proof}
Note that the vertices of $C_m^k$ are dominated by vertices of $C_m^{k-1}$ or $C_m^k$.
It suffices to verify that $\sum_{k=1} ^n  a_k \geq \lceil \frac{mn}{2} \rceil$.
In order to do, we claim $a_k + a_{k+1} \geq m$ for each $k$.
First of all, we assume that $a_{k+1}=0$.
Then to dominate $(i, k+1)$ for each $1 \leq i \leq m$, we must have
\[f((i-1,k)) + f((i,k)) \geq 2.\]
Then $2a_k = \sum_{i=1} ^m (f((i-1,k)) + f((i,k))) \geq 2m$ and hence $a_k + a_{k+1}  \geq m$.
If $a_{k+1} = t >0$, then there will be at least $m-t$ vertices in $V_0$ that will be dominated only by vertices of $C_m^k$.
This fact induces $a_k \geq m-t$ and so $a_k +a_{k+1} \geq m$.
Therefore, we have
\[2\gamma_I(C_m \otimes C_n) = 2\sum_{k=1} ^n  a_k = \sum_{k=1} ^n  (a_k + a_{k+1}) \geq nm.\]
This completes the proof.
\end{proof}

\begin{thm}\label{mainthm6}
For positive integers $m, n \geq 2$, $\gamma_I(C_m \otimes C_n)= \lceil \frac{mn}{2} \rceil$.
\end{thm}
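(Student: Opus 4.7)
The lower bound $\gamma_I(C_m \otimes C_n) \geq \lceil \frac{mn}{2} \rceil$ is already supplied by Lemma \ref{bound-strong}, so the remaining task is to exhibit an IDF of weight exactly $\lceil \frac{mn}{2} \rceil$. The plan is to try the checkerboard assignment
\[
f((i,k)) = \begin{cases} 1 & \text{if $i+k$ is even}, \\ 0 & \text{if $i+k$ is odd}, \end{cases} \qquad 1 \leq i \leq m,\ 1 \leq k \leq n,
\]
which never uses the label $2$. First I would count $|V_1|$ by splitting into the four parity combinations of $(m,n)$; a short calculation should yield $|V_1| = mn/2$ when $mn$ is even and $|V_1| = (mn+1)/2$ when $m$ and $n$ are both odd, so that $\omega(f) = \lceil mn/2 \rceil$ in every case.

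Next I would verify that $f$ is an IDF. Let $(i,k)$ be any vertex with $i+k$ odd; its three in-neighbors in $C_m \otimes C_n$ are $(i-1,k-1)$, $(i,k-1)$, $(i-1,k)$, with indices taken modulo $m$ and $n$. When $i \geq 2$ and $k \geq 2$ there is no wrap-around, the in-neighbors $(i,k-1)$ and $(i-1,k)$ both have index sum $i+k-1$, which is even, and so each carries $f$-value $1$, as required. The remaining cases are the two symmetric boundary situations ``$i=1$, $k$ even'' and ``$k=1$, $i$ even''; the corner $i=k=1$ is excluded because $1+1$ is even. In the case $i=1$, I would check by direct parity bookkeeping that when $m$ is even the in-neighbors $(1,k-1)$ and $(m,k)$ carry $f$-value $1$, while when $m$ is odd the in-neighbors $(1,k-1)$ and $(m,k-1)$ do; the case $k=1$ is then handled by swapping the roles of $m$ and $n$.

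The main subtlety is wrap-around: a naive checkerboard is inconsistent modulo an odd period, so one might worry that the construction cannot possibly work when $m$ or $n$ is odd. However, the function is well-defined on $\{1,\dots,m\} \times \{1,\dots,n\}$ by the parity of $i+k$, and the parity of $m$ (respectively $n$) merely selects \emph{which} two of the three in-neighbors supply the label $1$ at the boundary. Thus the IDF property survives in every parity combination, and together with Lemma \ref{bound-strong} this yields $\gamma_I(C_m \otimes C_n) = \lceil \frac{mn}{2} \rceil$.
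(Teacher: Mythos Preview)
Your argument is correct: the single checkerboard function $f((i,k))=1$ iff $i+k$ is even does give an IDF of weight $\lceil mn/2\rceil$ in every parity combination, and your boundary analysis (choosing, according to the parity of $m$ or $n$, which two of the three in-neighbours land on an even-sum vertex) is accurate.

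The paper's proof is organised slightly differently. When $m$ and $n$ are both odd, the paper uses exactly your checkerboard (written as $f((2i+1,2j+1))=f((2i,2j))=1$), so the two arguments coincide there. When at least one of $m,n$ is even, however, the paper instead takes the \emph{striped} function that assigns $1$ to every vertex in the odd-indexed copies $C_m^{2j-1}$ and $0$ elsewhere; this avoids any wrap-around discussion, since each zero vertex $(i,2j)$ has the two in-neighbours $(i-1,2j-1)$ and $(i,2j-1)$ both labelled $1$ regardless of position. So the paper trades a single unified construction for a case split that makes the verification in the even case trivially short, whereas your approach handles all parities at once at the cost of the boundary bookkeeping you outlined. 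Both routes are clean and yield the same bound.
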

\begin{proof}
We divide our consideration into the following two cases.

\vskip5pt
\text{Case 1.} $m$ or $n$ is even.

Since $C_m \otimes C_n$ is isomorphic to $C_n \otimes C_m$, we may assume that $n=2s$ for some positive integers $s$.

Define $f : V(C_m \otimes C_n) \rightarrow \{0,1,2\}$ by
\[f((i,2j-1))=1\]
for each $1 \leq i \leq m$ and $1 \leq j \leq s$, and
\[f((x,y))=0\]
otherwise.
It is easy to see that $f$ is an IDF of $C_m \otimes C_n$ with weight $\frac{mn}{2}$ and
so $\gamma_I(C_m \otimes C_n) \leq \lceil \frac{mn}{2} \rceil$.
Thus, it follows from Lemma \ref{bound-strong} that $\gamma_I(C_m \otimes C_n)= \lceil \frac{mn}{2} \rceil$.

\vskip5pt
\text{Case 2.}  $m=2r+1, n=2s+1$ for some positive integers $r, s$.

Define $f : V(C_m \otimes C_n) \rightarrow \{0,1,2\}$ by
\[f((2i+1,2j+1))=1\]
for each $0 \leq i \leq r$ and $0 \leq j \leq s$,
\[f((2i,2j))=1\] for each $1 \leq i \leq r$ and $1 \leq j \leq s$ and
\[f((x,y))=0\] otherwise.
It is easy to see that $f$ is an IDF of $C_m \otimes C_n$ with weight $(r+1)(s+1)+rs$ and
so $\gamma_I(C_m \otimes C_n) \leq \lceil \frac{mn}{2} \rceil$.
Thus, it follows from Lemma \ref{bound-strong} that $\gamma_I(C_m \otimes C_n)= \lceil \frac{mn}{2} \rceil$.
\end{proof}

\section{Conclusions}\label{sec:con}
In this paper, we determined the exact values of $\gamma_I(C_2 \square C_l)$, $\gamma_I(C_3 \square C_l)$ and $\gamma_I(C_m \square C_n)$
for an integer $l$ and even integers $m, n$.
The other cases are still open. We conclude by giving a conjecture.

\begin{con}
For an odd integer $n$, $\gamma_I(C_4 \square C_n) = 2n+2$.
\end{con}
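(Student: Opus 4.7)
My plan is to prove both inequalities in $\gamma_I(C_4 \square C_n) = 2n + 2$ for odd $n$, following the column-sum method used in Theorems \ref{mainthm5} and \ref{mainthm7}. Write $n = 2s + 1$.

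For the upper bound, define $f : V(C_4 \square C_n) \to \{0, 1, 2\}$ by $f((1, 2j-1)) = f((3, 2j-1)) = 1$ for $1 \leq j \leq s$, $f((2, 2j)) = f((4, 2j)) = 1$ for $1 \leq j \leq s$, $f((i, n)) = 1$ for $1 \leq i \leq 4$, and $f((x, y)) = 0$ otherwise. This is a diagonal pattern on the first $n - 1$ (even) columns plus a fully saturated final column to absorb the wrap-around. Every value-$0$ vertex in columns $2, \dots, n-1$ is dominated by two in-neighbours of value $1$ (one from the same column and one from the previous column), while each value-$0$ vertex in column $1$ draws support from column $n$. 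The weight is $2s + 2s + 4 = 2n + 2$.

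For the lower bound, set $a_k = \sum_{i=1}^{4} f((i, k))$, so $\gamma_I = \sum_{k=1}^n a_k$. The basic ingredient is the lemma $a_k + a_{k+1} \geq 4$ for every $k$, proved by analysing how the four vertices of column $k+1$ can be dominated using only in-neighbours from columns $k$ and $k + 1$; this already gives $\gamma_I \geq 2n$. I would then sharpen it to show that the equality $a_k + a_{k+1} = 4$ admits only three forms: $(a_k, a_{k+1}) = (2, 2)$ with one column of ``type $A$'' (ones at rows $\{1, 3\}$) and the other of ``type $B$'' (ones at $\{2, 4\}$); $(1, 3)$, which forces $a_{k-1} \geq 5$; or $(0, 4)$, which forces $a_{k-1} = 8$.

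If $\gamma_I = 2n$, every pair has sum $4$; the recursion $a_{k+2} = a_k$ combined with $n$ odd forces $a_k = 2$ for all $k$; every column is then of type $A$ or $B$, and the types must alternate around the cycle, which is impossible for odd $n$. If $\gamma_I = 2n + 1$, the total excess $\sum (a_k + a_{k+1} - 4)$ is exactly $2$, leaving only two sub-cases: exactly one pair of sum $6$, or exactly two pairs of sum $5$. In the first sub-case, $a_{k+2} = a_k$ on the $n - 1$ non-special pairs together with the odd length of the cycle force $(a_1, \dots, a_n) = (3, 1, 3, 1, \dots, 3)$, but a column with $a_k = 1$ requires $a_{k-1} \geq 5$, contradicting $a_{k-1} = 3$. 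In the second sub-case, sum-$5$ pairs of shapes $(1, 4), (4, 1), (0, 5), (5, 0)$ are ruled out as infeasible or by excess-propagation, leaving only $(2, 3)$ and $(3, 2)$; the two specials must then be adjacent at $(m, m+1)$ and $(m+1, m+2)$ with $a_{m+1} = 3$ and every other $a_k = 2$. A case check on the sum-$3$ column---whether it consists of three $1$'s or of one $2$ and one $1$---together with the domination requirements of columns $m$ and $m + 2$ shows that columns $m$ and $m + 2$ must be of the same type, whereas the $(2,2)$-alternation along the remaining $n - 1$ columns forces them to be of opposite types when $n$ is odd, giving the final contradiction.

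The main obstacle I expect is the $\gamma_I = 2n + 1$ case. The excess being only $2$ makes several local configurations look plausible a priori; the argument requires a careful classification of the special pair(s), a structural analysis of the sum-$3$ column and its two neighbours, and a global parity argument around the whole cycle. The other ingredients---the construction, the basic lemma, and the $\gamma_I = 2n$ case---are largely mechanical.
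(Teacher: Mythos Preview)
The paper does not prove this statement at all: it appears only in the Conclusions section as an open conjecture, with no accompanying argument. So there is no ``paper's own proof'' to compare your attempt against.

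On its own merits, your outline is sound and, once the case checks are written out, would in fact resolve the conjecture. The construction for the upper bound is correct (each zero vertex picks up one in-neighbour from its own column and one from the previous column, with column~$n$ supplying the wrap-around for column~$1$), and the basic inequality $a_k+a_{k+1}\ge 4$ together with the type-$A$/type-$B$ alternation disposes of $\gamma_I=2n$ exactly as you describe. In the $\gamma_I=2n+1$ case your reduction is right: any column with $a_k\le 1$ forces $a_{k-1}\ge 5$ and hence a pair of sum~$\ge 6$, so every $a_k\ge 2$, leaving exactly one column of weight~$3$ and the two special pairs adjacent to it. The remaining $n-1$ columns are then all of weight~$2$; the $(2,2)$ analysis forces each of them to be of type~$A$ or~$B$, and the $n-2$ consecutive $(2,2)$-pairs among them force columns $m$ and $m+2$ to have \emph{opposite} types since $n-2$ is odd. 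Finally, running through the finitely many shapes of the weight-$3$ column (three $1$'s, or a $2$ and a $1$ in the various relative positions) and using the domination constraints on columns $m+1$ and $m+2$ shows that columns $m$ and $m+2$ must have the \emph{same} type, yielding the contradiction.

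The only caution is that the last step is a genuine case analysis with several sub-cases (including ruling out the ``adjacent $1$'s'' and ``single $2$'' configurations for column $m+2$ via the constraint coming from column $m-1$), and your write-up would need to spell these out rather than leave them as ``a case check''. But there is no missing idea: you have gone beyond the paper here.
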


\bibstyle{plain}

\end{document}